\theoremstyle{definition}
\newtheorem{theorem}{Theorem}
\newtheorem{proposition}{Proposition}
\newtheorem{lemma}{Lemma}
\newtheorem{corollary}{Corollary}
\theoremstyle{remark}
\newtheorem{remark}{Remark}
\def \RR {\mathbb{R}}
\def \1{{\mathbf{1}}}
\def \0{{\mathbf{0}}}
\def \C {\mathscr{C}}
\def \NC {\mathscr{NC}}
\def \SNC {\mathscr{SNC}}
\begin{document}

\title{\bf\Large Exact bounds of the M\"obius inverse of monotone set functions}

\author{Michel GRABISCH\thanks{Corresponding author. Paris School of Economics,
          University of Paris I,
         106-112, Bd. de l'H\^opital, 75013 Paris, France. Tel. (33)
        144-07-82-85, Fax (33)-144-07-83-01. Email:
         \texttt{michel.grabisch@univ-paris1.fr}}       \and Pedro
         MIRANDA\thanks{Complutense University of Madrid. Plaza de Ciencias, 3,
           28040 Madrid, Spain Email:
       \texttt{pmiranda@mat.ucm.es}}
}

\date{\today}

\maketitle

\begin{abstract}
We give the exact upper and lower bounds of the M\"obius inverse of monotone and
normalized set functions (a.k.a. normalized capacities) on a finite set of $n$
elements. We find that the absolute value of the bounds tend to
$\frac{4^{n/2}}{\sqrt{\pi n/2}}$ when $n$ is large. We establish also the exact bounds of
the interaction transform and Banzhaf interaction transform, as well as the
exact bounds of the M\"obius inverse for the subfamilies of $k$-additive normalized
capacities and $p$-symmetric normalized capacities.
\end{abstract}

\noindent {\bf Keywords:} M\"obius inverse, monotone set function, interaction

\noindent AMS Classification: 05, 06, 91

\section{Introduction}
The M\"obius function is a well-known tool in combinatorics and partially
ordered sets (see, e.g., \cite{aig79,lin97,rot64}). In the field of
decision theory, the M\"obius inverse of a monotone set function (called a
capacity) is a fundamental concept permitting to derive simple expressions of
nonadditive integrals and to analyze the core of capacities (set of probability
measures dominating a capacity) \cite{chja89}. Set functions can also be seen as
pseudo-Boolean functions, and it is well known that the M\"obius inverse
corresponds to the coefficients of the polynomial representation of a
pseudo-Boolean function.  In particular, monotone and normalized
  pseudo-Boolean functions correspond to semicoherent structure functions in
  reliability theory (see, e.g., Marichal and Mathonet \cite{mama13}, Marichal
  \cite{mar14}). 

Consider $N=\{1,\ldots,n\}$ and a monotone set function
$\mu:2^{N}\rightarrow[0,1]$ with the property $\mu(\varnothing)=0$ and
$\mu(N)=1$ (normalized capacity).  In optimization problems involving capacities
or monotone pseudo-Boolean functions (as in reliability) it is often
useful to know the bounds of the M\"obius inverse to use algorithmic
  methods (see Crama and Hammer \cite{crha11}, Chapter~13). This is the case
for example when dealing with $k$-additive measures, which are best represented
through their M\"obius inverse (see below); then, when solving optimization
problems like model fitting, algorithms usually need to fix an interval where
the searched values lay, and the upper and lower bounds are the natural limits
of these intervals. Surprisingly, although $\mu$ takes values in $[0,1]$, the
exact bounds of its M\"obius inverse grow rapidly with $n$, approximately in
$\frac{4^{n/2}}{\sqrt{\pi n/2}}$ when $n$ is large. The aim of the paper is to
establish this result, correcting wrong bounds obtained in a previous paper by
the authors \cite{migr99a}, and providing a complete proof of the result. We
extend this result to the interaction transform, another useful linear
invertible transform of set functions, and we consider also specific subclasses
of capacities, like $k$-additive and $p$-symmetric capacities.

\section{Preliminaries}
Let $N=\{1,\ldots,n\}$. A \textit{capacity} on $N$ is a set function
$\mu:2^{N}\rightarrow \RR$ satisfying $\mu(\varnothing)=0$ and monotonicity:
$A\subseteq B\subseteq N$ implies $\mu(A)\leqslant \mu(B)$. A capacity is
\textit{normalized} if in addition $\mu(N)=1$. We denote respectively by $\C(N)$
and $\NC(N)$ the set of capacities and normalized capacities on $N$. The set
$\NC(N)$ is a convex closed polytope, whose extreme points are all
$\{0,1\}$-valued normalized capacities (as the polytope of normalized capacities
is an order polytope, this result has been shown by Stanley \cite{sta86}. For a
direct proof, see \cite{rad98}). We denote by
$\NC_{0,1}(N)$ the set of all $\{0,1\}$-valued normalized capacities.

Consider a set function $\xi$ on $N$ such that $\xi(\varnothing)=0$. The
\textit{monotonic cover} of $\xi$ is the smallest capacity $\mu$ such that
$\mu\geqslant \xi$. We denote it by $\widehat{\xi}$, and it is given by
\begin{equation}\label{eq:mc}
\widehat{\xi}(A) = \max_{B\subseteq A}\xi(B)\qquad (A\subseteq N).
\end{equation}

Consider now a set function $\xi:2^N\rightarrow \RR$. The linear system
\begin{equation}\label{eq:1}
\xi(A) = \sum_{B\subseteq A}m(B) \qquad (A\in 2^N)
\end{equation}
has always a unique solution, known as the \textit{M\"obius inverse}
\cite{rot64}, and is given by
\begin{equation}\label{eq:2}
m(A) = \sum_{B\subseteq A}(-1)^{|A\setminus B|}\xi(B) \qquad (A\in 2^N).
\end{equation}
Since $m$ is also a set function, we view now the M\"obius inverse as a transform on
the set of set functions:
\[
m:\RR^{(2^N)}\rightarrow \RR^{(2^N)}; \xi\mapsto m^\xi\text{ given by (\ref{eq:2})}.
\]
We call $m$ the \textit{M\"obius transform} of $\xi $. Remark that it is a linear invertible
transform.

We introduce another linear invertible transform, which is useful in decision
making, called the \textit{interaction transform}. To this end we introduce the
derivative of a set function $\xi$. Let $i\in N$ and $A\subseteq N\setminus\{i\}$. The
\textit{derivative} of $\xi$ w.r.t. $i$ at $A$ is defined by $\Delta_i\xi(A) =
\xi(A\cup \{i\}) - \xi(A)$. Derivatives w.r.t. sets are defined recursively by
\[
\Delta_K\xi(A) = \Delta_{K\setminus \{i\}}(\Delta_{\{i\}}\xi(A)) \qquad
(|K|\geqslant 1)
\]
with $i\in K$, $\Delta_{\{i\}}\xi=\Delta_i\xi$, and
$\Delta_\varnothing\xi=\xi$. For $A\subseteq N\setminus K$, we obtain
\[
\Delta_K\xi(A) = \sum_{L\subseteq K}(-1)^{|K\setminus L|}\xi(A\cup L).
\]
 Also, observe that 
\begin{equation}\label{eq:md}
m^\xi(A) = \Delta_A\xi(\varnothing) \qquad (A\in 2^N).
\end{equation}
The interaction transform $I:\RR^{(2^N)}\rightarrow \RR^{(2^N)}$ computes a weighted
average of the derivatives:
\begin{equation}\label{eq:3}
I^\xi(A) = \sum_{B\subseteq N\setminus
  A}\frac{(n-b-a)!b!}{(n-a+1)!}\Delta_A\xi(B) \qquad (A\in 2^N),
\end{equation}
where $a=|A|, b=|B|$. Its expression through the M\"obius transform is much
simpler:
\begin{equation}\label{eq:4}
I^\xi(A) = \sum_{B\supseteq A}\frac{1}{b-a+1}m^\xi(B),
\end{equation}
while the inverse relation uses the Bernoulli numbers $B_k$:
\[
m^\xi(A) = \sum_{B\supseteq A}B_{a-b}I^\xi(B).
\]
(see \cite{degr96,grmaro99a} for details). Another related transform is the
\textit{Banzhaf interaction transform} $I_{\mathrm{B}}$, which is the
(unweighted) average of the derivatives:
\begin{equation}\label{eq:3b}
I^\xi_{\mathrm{B}}(A) = \frac{1}{2^{n-a}}\sum_{B\subseteq N\setminus
  A}\Delta_A\xi(B) \qquad (A\in 2^N).
\end{equation}

\medskip

Lastly, we introduce two specific families of normalized capacities. A
normalized capacity $\mu$ is said to be \textit{at most $k$-additive}
($1\leqslant k\leqslant n$) if $m^\mu(A)=0$ for every set $A\in 2^N$ such that
$|A|>k$ \cite{gra96f}. 1-additive capacities are ordinary additive capacities,
i.e., satisfying $\mu(A\cup B)=\mu(A)+\mu(B)$ for disjoint sets $A,B$. Note that
by (\ref{eq:4}), $m^\mu$ can be replaced by $I^\mu$ in the above definition.

We denote by $\NC^{\leqslant k}(N)$ the set of at most $k$-additive capacities
on $N$. It is a convex closed polytope (see \cite{micogi06} for a study of its properties).

Another family of interest is the family of $p$-symmetric capacities
\cite{migrgi02}. A capacity $\mu$ is \textit{symmetric} if $\mu(A)=\mu(B)$
whenever $|A|=|B|$.  We denote by $\SNC(N)$ the set of symmetric normalized
capacities. This notion can be generalized as follows. A nonempty subset
$A\subseteq N$ is a \textit{subset of indifference} for $\mu$ if for all
$B_1,B_2\subset A$ with $|B_1|=|B_2|$, we have $\mu(C\cup B_1)= \mu(C\cup B_2)$
for every $C\subseteq N\setminus A$. The \textit{basis} of the capacity is the
coarsest partition of $N$ into subsets of indifference. It always exists and is
unique \cite{migr03a}. Now, $\mu$ is \textit{$p$-symmetric} with respect to the
partition $\{ A_1,\ldots, A_p\} $ if this partition is its basis. Symmetric
games are therefore 1-symmetric games (with respect to the basis $\{ N\} $). We
denote by $\SNC^{\leqslant p}(A_1,\ldots, A_p)$ the set of normalized
capacities such that $A_1,\ldots, A_p$ are subsets of indifference. It is a
convex closed polytope (again, see \cite{micogi06} for a study of its properties).

Lastly, we mention a combinatorial result on the binomial coefficients:
\begin{equation}\label{eq:1.comb2}
\sum_{\ell=0}^k(-1)^\ell\binom{n}{\ell} = (-1)^k\binom{n-1}{k} \quad (k<n),
\end{equation}
for any positive integer $n$.

\section{Exact bounds of the M\"obius inverse}
We present in this section the main result of the paper.

\begin{theorem}\label{th:2.boundM}
For any normalized capacity $\mu$, its M\"obius transform satisfies for any
$A\subseteq N$, $|A|>1$:
\[
-\binom{|A|-1}{l'_{|A|}} \leqslant m^\mu(A)\leqslant \binom{|A|-1}{l_{|A|}},
\]
with
\begin{equation}\label{eq:boundm}
l_{|A|}  = 2\left\lfloor \frac{|A|}{4}\right\rfloor, \quad
l'_{|A|}  = 2\left\lfloor \frac{|A|-1}{4}\right\rfloor +1
\end{equation}
and for $|A|=1<n$:
\[
0\leqslant m^\mu(A) \leqslant 1,
\]
and $m^\mu(A)=1$ if $|A|=n=1$.
These upper and  lower bounds are attained by the normalized capacities
$\mu^*_A,\mu_{A*}$, respectively:
\[
\mu^*_A(B) = \begin{cases}
1, &\text{if } |A|-l_{|A|}\leqslant |B\cap A|\leqslant |A|\\ 0, & \text{otherwise}
\end{cases},\quad \mu_{A*}(B) = \begin{cases}
1, &\text{ if } |A|-l'_{|A|}\leqslant |B\cap A|\leqslant |A|\\ 0, & \text{otherwise}\end{cases}
\]
for any $B\subseteq N$.
\end{theorem}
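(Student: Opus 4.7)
The plan is to express $m^\mu(A)$ as a linear functional of nonnegative ``layer increments'' coming from $\mu$, so that the extremes become an elementary linear program on a simplex.

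Fix $A\subseteq N$ with $a=|A|\geq 2$ (the case $|A|=1$ reduces to $m^\mu(\{i\})=\mu(\{i\})\in[0,1]$, with $\mu(\{i\})=1$ forced only when $n=1$). Since $m^\mu(A)$ depends only on $\mu|_{2^A}$, I group subsets of $A$ by cardinality and introduce the layer averages
\[
p_k := \binom{a}{k}^{-1}\sum_{B\subseteq A,\ |B|=k}\mu(B),\qquad k=0,1,\ldots,a.
\]
A standard double-counting argument using $\mu(B)\leq \mu(C)$ for $B\subseteq C$ gives $0=p_0\leq p_1\leq\cdots\leq p_a$, together with $p_a=\mu(A)\leq\mu(N)=1$. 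Rewriting $m^\mu(A)=\sum_{k=0}^{a}(-1)^{a-k}\binom{a}{k}\,p_k$ and applying Abel summation to the increments $\Delta p_j:=p_{j+1}-p_j\geq 0$, the partial sums are evaluated by identity~(\ref{eq:1.comb2}),
\[
\sum_{k=0}^{j}(-1)^{a-k}\binom{a}{k}=(-1)^{a-j}\binom{a-1}{j}\qquad (j<a),
\]
which yields the key formula
\[
m^\mu(A)=\sum_{j=0}^{a-1}(-1)^{a-j-1}\binom{a-1}{j}\,\Delta p_j,\qquad \Delta p_j\geq 0,\quad \sum_{j=0}^{a-1}\Delta p_j=p_a\leq 1.
\]

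The right-hand side is a linear functional on the simplex $\{\Delta p\geq 0,\ \sum_j\Delta p_j\leq 1\}$, whose extrema over this region are attained at a vertex: either the origin (value $0$) or $\Delta p_{j^*}=1$ for a single index $j^*$, with corresponding value $(-1)^{a-j^*-1}\binom{a-1}{j^*}$. The maximum requires $j^*\equiv a-1\pmod 2$ (to make the sign positive) together with $\binom{a-1}{j^*}$ as large as possible; a short parity check shows the maximizer is $j^*=a-1-l_a$ with $l_a=2\lfloor a/4\rfloor$, giving the upper bound $\binom{a-1}{l_a}$. A symmetric argument gives the lower bound $-\binom{a-1}{l'_a}$. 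The maximizing vertex $\Delta p_{j^*}=1$ corresponds to $p_k=0$ for $k\leq j^*$ and $p_k=1$ for $k>j^*$, which is realised exactly by the threshold capacity $\mu(B)=\mathbf{1}[|B\cap A|\geq a-l_a]$, that is, $\mu^*_A$; likewise $\mu_{A*}$ attains the lower bound. Monotonicity and the condition $\mu(N)=1$ are immediate for both.

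The main obstacle is the algebraic step: the combination of Abel summation and identity~(\ref{eq:1.comb2}) is what converts the alternating binomial-weighted expression defining $m^\mu(A)$ into a linear functional in the nonnegative increments $\Delta p_j$, again with alternating binomial coefficients but now indexed over $j<a$. Once this rewriting is in place, the linear-programming conclusion is routine, and only a small parity bookkeeping remains, namely checking that $l_a$ and $l'_a$ from~(\ref{eq:boundm}) indeed pick out the arguments of the largest values of $\binom{a-1}{\cdot}$ in the appropriate parity class.
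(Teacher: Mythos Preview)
Your proof is correct, and it is close in spirit to the paper's but uses a genuinely different key step. The paper first symmetrizes $\mu$ via the action of $S_n$ to reduce to $\SNC(N)$, then invokes the fact that the vertices of this polytope are exactly the threshold capacities $\mu_k$, evaluates $m^{\mu_k}(N)$ via identity~(\ref{eq:1.comb2}), and finally passes from $A=N$ to general $A$ by restricting to the sublattice $2^A$ and taking a monotonic cover. You bypass the group action and the polytope vocabulary: the double-counting inequality $p_k\leqslant p_{k+1}$ is the symmetrization step in disguise, and Abel summation together with~(\ref{eq:1.comb2}) turns $m^\mu(A)$ directly into a linear functional on the simplex $\{\Delta p\geqslant 0,\ \sum_j\Delta p_j\leqslant 1\}$, whose vertices $e_{j^*}$ correspond exactly to the paper's threshold capacities. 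The two arguments therefore identify the same extremizers, but yours is slightly more self-contained: it treats all $A\subseteq N$ at once (the constraint $p_a=\mu(A)\leqslant 1$ is built in, so no separate monotonic-cover argument is needed) and does not rely on knowing the vertex set of $\SNC(N)$ in advance. The paper's route, in return, makes the symmetry reduction explicit and ties the result to the known structure of the symmetric-capacity polytope.
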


We give in Table~\ref{tab:2.mobb} the first values of the bounds.
\begin{table}[htb]
\begin{tabular}{|l|rrrrrrrrrrrr|}\hline
$|A|$ & 1 & 2 & 3 & 4 & 5 & 6 & 7 & 8 & 9 & 10 & 11 & 12\\ \hline
u.b. of $m^\mu(A)$ & 1 & 1 & 1 & 3 & 6 & 10 & 15 & 35 & 70 & 126 & 210 & 462\\ \hline
l.b. of $m^\mu(A)$ & $1(0)$ & $-1$ & $-2$ & $-3$ & $-4$ & $-10$ & $-20$ & $-35$ & $-56$ &
$-126$ & $-252$ & $-462$ \\ \hline
\end{tabular}
\caption{Lower and upper bounds for the M\"obius transform of a normalized
  capacity}
\label{tab:2.mobb}
\end{table}
Using the well-known Stirling's approximation
$\binom{2n}{n}\simeq\frac{4^n}{\sqrt{\pi n}}$ for $n\rightarrow\infty$, we
deduce that
\[
-\frac{4^{\frac{n}{2}}}{\sqrt{\frac{\pi n}{2}}}\leqslant m^\mu(N)\leqslant
\frac{4^{\frac{n}{2}}}{\sqrt{\frac{\pi n}{2}}}
\]
when $n$ tends to infinity.

\begin{proof}
Let us prove the result for the upper bound when $A=N$.
 We consider the group $S_n$ of permutations on $N$. For any $\sigma\in S_n$
and any capacity $\mu\in \NC(N)$, we define the capacity $\sigma(\mu)\in \NC(N)$ by
$\sigma(\mu)(B) = \mu(\sigma^{-1}(B))$ for any $B\subseteq N$.

We observe that the target function $m^\mu(N)$ is invariant under
permutation. Indeed, 
\begin{align*}
m^{\sigma(\mu)}(N)  &= \sum_{B\subseteq N}(-1)^{n-|B|}\mu(\sigma^{-1}(B))\\
 & = \sum_{B'\subseteq N}(-1)^{n-|B'|}\mu(B') \qquad \text{(letting
}B'=\sigma^{-1}(B))\\
 & = m^\mu(N).
\end{align*}
For every set function $\mu$ on $N$, define its symmetric part $\mu^s =
\frac{1}{n!}\sum_{\sigma\in S_n}\sigma(\mu)$, which is a symmetric function. By
convexity of $\NC(N)$, if $\mu\in\NC(N)$, then so is $\mu^s$, and by linearity
of the M\"obius inverse, we have
\[
m^{\mu^s}(N) = \frac{1}{n!}\sum_{\sigma\in S_n}m^{\sigma(\mu)}(N) =
\frac{1}{n!}\sum_{\sigma\in S_n}m^\mu(N) = m^\mu(N).
\]
It is therefore sufficient to maximize $m^\mu(N)$ on the set of symmetric
normalized capacities $\SNC(N)$. But this set is also a convex polytope, whose
extreme points are the following $\{0,1\}$-valued capacities
$\mu_k$ defined by
\[
\mu_k(B) = 1 \text{ iff } |B|\geqslant n-k \qquad (k=0,\ldots, n-1).
\]
Indeed, if $\mu$ is symmetric, it can be written as a convex combination of
these capacities:
\[
\mu = \mu(\{1\})\mu_{n-1} + \sum_{k=2}^n\big(\mu(\{1,\ldots, k\}) -
\mu(\{1,\ldots, k-1\})\big)\mu_{n-k}
\]
It follows that the maximum of $m^\mu(N)$ is attained on one of these
capacities, say $\mu_k$. We compute
\begin{align}
m^{\mu_k}(N) & = \sum_{B\subseteq N}(-1)^{|N\setminus B|}\mu(B)
=\sum_{i=n-k}^n(-1)^{n-i}\binom{n}{i}\nonumber\\
 & = \sum_{i'=0}^k(-1)^{i'}\binom{n}{n-i'} = (-1)^{k}\binom{n-1}{k},\label{eq:boundm1}
\end{align}
where the third equality is obtained by letting $i'=n-i$ and the last one
follows from (\ref{eq:1.comb2}). 
Therefore $k$ must be even.  If $n-1$ is even, the maximum of $\binom{n-1}{k}$
for $k$ even is attained for $k=\frac{n-1}{2}$ if this is an even number,
otherwise $k=\frac{n-3}{2}$. If $n-1$ is odd, the maximum of $\binom{n-1}{k}$ is
reached for $k=\lceil\frac{n-1}{2}\rceil$ and $k-1=\lfloor\frac{n-1}{2}\rfloor$,
among which the even one must be chosen. As it can be checked (see
Table~\ref{tab:1} below), this amounts to taking
\[
k = 2\left\lfloor \frac{n}{4}\right\rfloor
\]
that is, $k=l_n$ as defined in (\ref{eq:boundm}), and we have defined the capacity
\[
\mu^*(B) = 1\text{ if } n-l_{n}\leqslant |B|\leqslant n,
\]
which is $\mu^*_N$ as defined in the theorem.

For establishing the upper bound of $m^\mu(A)$ for any $A\subset N$,
  remark that the value of $m^\mu(A)$ depends only on the subsets of $A$. It
  follows that applying the above result to the sublattice $2^A$, the set
  function $\xi^*_A$ defined on $2^N$ by
\[
\xi^*_A(B) = 1 \text{ if } B\subseteq A \text{ and }|A|-l_{|A|}\leqslant |B|\leqslant |A|,
\text{ and 0 otherwise},
\]
yields an optimal value for $m^\mu(A)$. It remains to turn this set function
into a capacity on $N$, without destroying optimality. This can be done since $\xi^*_A$
is monotone on $2^A$, so that taking the monotonic cover of $\xi^*_A$ by
(\ref{eq:mc}) yields an optimal capacity, given by
\[
\widehat{\xi^*_A}(B) = \max_{C\subseteq B}\xi^*_A(C) = 1\text{ if }
|A|-l_{|A|}\leqslant |B\cap
A|\leqslant |A|, \text{ and 0 otherwise},
\]
which is exactly $\mu^*_A$ as desired. Note however that this is not the only
optimal solution in general, since values of the capacty on the sublattice
$2^{N\setminus A}$ are irrelevant.

One can proceed in a similar way for the lower bound. In this case however, as
it can be checked, the
capacity must be equal to 1 on the $l'_n+1$ first lines of the lattice $2^N$,
with $l'_n=2\left\lfloor\frac{n-1}{4}\right\rfloor+1$ (see
Table~\ref{tab:1}).
\end{proof}
 \begin{table}[htb]
\begin{center}
\begin{tabular}{c|rrrrrrrrrrrr}
$n/k$ & $0$ & $1$ &$2$ &$3$ &$4$ &$5$ &$6$ &$7$ &$8$ &$9$ &$10$ &$11$
  \\ \hline
$n=1$ & \red{1} &&&&&&&&&&& \\
$n=2$ & \red{1} & \blue{$-1$} &&&&&&&&&& \\ 
$n=3$ & \red{1} & \blue{$-2$} & 1 &&&&&&&&& \\ 
$n=4$ & 1 & \blue{$-3$} & \red{3} & $-1$ &&&&&&&& \\ 
$n=5$ & 1 & $-4$ & \red{6} & \blue{$-4$} & 1 &&&&&&& \\ 
$n=6$ & 1 & $-5$ & \red{10} & \blue{$-10$} & 5 & $-1$ &&&&&& \\ 
$n=7$ & 1 & $-6$ & \red{15} & \blue{$-20$} & 15 & $-6$ & 1 &&&&& \\ 
$n=8$ & 1 & $-7$ & 21 & \blue{$-35$} & \red{35} & $-21$ & 7 & $-1$ &&&& \\ 
$n=9$ & 1 & $-8$ & 28 & $-56$ & \red{70} & \blue{$-56$} & 28 & $-8$ & 1 &&& \\ 
$n=10$ & 1 & $-9$ & 36 & $-84$ & \red{126} & \blue{$-126$} & 84 & $-36$ & 9 & $-1$ && \\ 
$n=11$ & 1 & $-10$ & 45 & $-120$ & \red{210} & \blue{$-252$} & 210 & $-120$ & 45 & $-10$ & 1 &\\ 
$n=12$ & 1 & $-11$ & 55 & $-105$ & 330 & \blue{$-462$} & \red{462} & $-330$ & 165 & $-55$ & 11
  & $-1$\\ 
\end{tabular}
\end{center}
\caption{Computation of the upper (red) and lower (blue) bounds. The value of
  the capacity $\mu$ is 1 for the $k+1$ first
lines of the lattice $2^N$. Each entry $(n,k)$ equals $m^\mu(N)$, as given by
(\ref{eq:boundm1}).} 
\label{tab:1}
\end{table}

\section{Exact bounds of the interaction transforms}
We begin by establishing a technical lemma which will permit to get the results
easily from Theorem~\ref{th:2.boundM}.
\begin{lemma}\label{lem:1}
Let $A,B\subset N$, $A\neq\varnothing$, be disjoint sets. Then
\begin{equation}\label{eq:max}
\max_{\mu\in\NC(N)}\sum_{C\subseteq A}(-1)^{a-c}\mu(B\cup C) = \max_{\mu\in\NC(N)}m^\mu(A),
\end{equation}
and the maximum is attained for $\mu=\mu_A^*$.
\end{lemma}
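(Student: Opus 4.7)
The key observation is that the left-hand side of (\ref{eq:max}) equals the derivative $\Delta_A\mu(B)$. Denote this linear functional by $T(\mu)$. The inequality $\geqslant$ in (\ref{eq:max}) follows by evaluating at $\mu=\mu^*_A$: since $A\cap B=\varnothing$, $|(B\cup C)\cap A|=|C|$, so $\mu^*_A(B\cup C)=\mu^*_A(C)$ for every $C\subseteq A$. Hence
\[
T(\mu^*_A)=\sum_{C\subseteq A}(-1)^{a-c}\mu^*_A(C)=m^{\mu^*_A}(A)=\binom{a-1}{l_{a}}
\]
by Theorem~\ref{th:2.boundM}.

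For the reverse inequality, the plan is to reuse the symmetrization argument of that proof, but with the subgroup $S_A\subset S_n$ of permutations fixing $N\setminus A$ pointwise. Since $\sigma(B)=B$ for $\sigma\in S_A$, the change of variable $C'=\sigma^{-1}(C)$ yields $T(\sigma(\mu))=T(\mu)$; by convexity of $\NC(N)$ and linearity of $T$, $T(\mu)=T(\mu^s)$ for $\mu^s=\frac{1}{|S_A|}\sum_{\sigma\in S_A}\sigma(\mu)\in\NC(N)$, which is $A$-symmetric in the sense that $\mu^s(B\cup C)$ depends only on $c=|C|$ for $C\subseteq A$. Writing this common value as $f(c)$, monotonicity forces $0\leqslant f(0)\leqslant f(1)\leqslant\cdots\leqslant f(a)\leqslant 1$, so that
\[
T(\mu^s)=\sum_{c=0}^{a}(-1)^{a-c}\binom{a}{c}f(c)
\]
is a linear functional of $f$ on the order polytope of non-decreasing $[0,1]$-sequences of length $a+1$.

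The extreme points of that polytope are the indicator sequences $f(c)=\mathbf{1}_{c\geqslant c_0}$, $c_0\in\{0,\ldots,a+1\}$. A computation identical to (\ref{eq:boundm1}), using identity (\ref{eq:1.comb2}), gives $T=(-1)^{a-c_0}\binom{a-1}{a-c_0}$ for $c_0\in\{1,\ldots,a\}$ and $T=0$ at the two endpoint values. Maximizing over $c_0$ is therefore the exact combinatorial problem solved in Theorem~\ref{th:2.boundM}, with optimum $\binom{a-1}{l_a}$ attained at $c_0=a-l_a$. The main subtlety to check is that this optimal extreme-point sequence of the reduced polytope is actually realized by a bona fide normalized capacity on $N$, not just by an abstract sequence $(f(c))$; and this is exactly what $\mu^*_A$ provides, since $\mu^*_A(B\cup C)=\mathbf{1}_{|C|\geqslant a-l_a}$. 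Both inequalities thus coincide, with $\mu^*_A$ as witness.
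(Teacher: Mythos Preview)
Your proof is correct. Both directions are sound: the evaluation at $\mu^*_A$ gives the lower bound for the left-hand side, and your $S_A$-symmetrization argument correctly reduces the upper bound to a linear optimization over the order polytope of nondecreasing $[0,1]$-sequences of length $a+1$, whose extreme-point computation matches~(\ref{eq:boundm1}).

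However, your route differs from the paper's. The paper does not re-run the symmetrization argument of Theorem~\ref{th:2.boundM}; instead it argues directly on vertices: since $\Delta_A\mu(B)$ is linear on the polytope $\NC(N)$, its maximum is attained at some $\{0,1\}$-valued $\mu$, and after discarding the trivial case $\mu(B\cup A)=\mu(B)$ one sets $\mu_B(C):=\mu(B\cup C)$ to obtain a $\{0,1\}$-valued normalized capacity on $A$ with $m^{\mu_B}(A)=\Delta_A\mu(B)$. This gives a clean bijection between the two optimization problems, and Theorem~\ref{th:2.boundM} is then invoked as a black box. Your approach, by contrast, essentially re-derives the combinatorial optimization inside the proof, tailored to the ``shifted'' functional $\Delta_A\mu(B)$. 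The paper's argument is shorter and more modular; yours has the advantage of being self-contained and of making explicit why the presence of the offset $B$ is immaterial (it is absorbed by the $S_A$-invariance).
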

\begin{proof}
The function we have to maximize is simply the derivative $\Delta_A\mu(B)$.  As
this is a linear function in $\mu$ and $\NC(N)$ is a polytope, its maximum is
attained on a vertex, i.e. a $\{ 0, 1\} $-valued capacity. If $\mu(B\cup
A)=\mu(B)$, then by monotonicity of $\mu$ we get $\Delta_A\mu(B)=0$. Since this
is clearly not the maximum of the derivative, we can discard such capacities
$\mu$ from the analysis. Assuming then $\mu(B\cup A) >\mu(B)$, we define a
capacity $\mu_B\in\C(A)$ by
\begin{equation}\label{eq:10}
\mu_B(C) = \mu(B\cup C) - \mu(B) \qquad (C\subseteq A).
\end{equation}
Observe that if $\mu$ is $\{0,1\}$-valued, then necessarily $\mu(B\cup
  A)=1$ and $\mu(B)=0$, hence (\ref{eq:10}) collapses to $\mu_B(C) = \mu(B\cup
  C)$, for any $C\subseteq A$, and $\mu_B$ is $\{0,1\}$-valued and normalized too.
Moreover, any $\{0,1\}$-valued normalized capacity on $A$ can be obtained from a
$\{0,1\}$-valued normalized capacity on $N$ by the latter equality. 
  On the other hand, remark that for any $\mu\in \NC(N)$
\[
m^{\mu_B}(A) = \sum_{C\subseteq A}(-1)^{a-c}\mu_B(C) = \sum_{C\subseteq A}(-1)^{a-c}\mu(B\cup C)
\]
since $\sum_{C\subseteq A}(-1)^{a-c}=0$. In summary, we have
\[
\max_{\mu\in\NC(N)}\Delta_A\mu(B)=\max_{\mu\in\NC_{0,1}(N)}\Delta_A\mu(B) =
\max_{\mu\in\NC_{0,1}(A)}m^{\mu}(A) = \max_{\mu\in\NC(A)}m^{\mu}(A) = \max_{\mu\in\NC(N)}m^{\mu}(A),
\]
the last equality coming from Theorem~\ref{th:2.boundM}. Hence
(\ref{eq:max}) is established, the value of the maximum is given by
Theorem~\ref{th:2.boundM}, as well as the capacity attaining the maximum.
\end{proof}
A similar result can be established for the lower bound.
\begin{corollary}\label{cor:1}
Consider $A\subseteq N.$ The upper and lower bounds for the interaction
transform $I(A)$ are the same as
for $m(A)$, and they are obtained for the capacities $\mu^*_A$
and $\mu_{A*}$ of Theorem~\ref{th:2.boundM}.
\end{corollary}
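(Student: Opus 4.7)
The plan is to view $I^\mu(A)$ as a genuine convex combination of the derivatives $\Delta_A\mu(B)$ and then invoke Lemma~\ref{lem:1} uniformly in $B$. A short calculation shows that the Shapley weights sum to one: grouping subsets of $N\setminus A$ by cardinality,
\[
\sum_{B\subseteq N\setminus A}\frac{(n-b-a)!\,b!}{(n-a+1)!} = \sum_{b=0}^{n-a}\binom{n-a}{b}\frac{(n-b-a)!\,b!}{(n-a+1)!} = \frac{n-a+1}{n-a+1} = 1.
\]
Since these weights are positive and independent of $\mu$, formula (\ref{eq:3}) exhibits $I^\mu(A)$ as a true convex combination of the derivatives $\Delta_A\mu(B)$.

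By Lemma~\ref{lem:1}, each such derivative satisfies $\Delta_A\mu(B)\leqslant \binom{|A|-1}{l_{|A|}}$, and crucially this bound does not depend on $B$. Consequently $I^\mu(A)\leqslant \binom{|A|-1}{l_{|A|}}$ for every $\mu\in\NC(N)$. To see the bound is tight, I would verify that the \emph{same} capacity $\mu^*_A$ realizes the maximum of $\Delta_A\mu(B)$ simultaneously for every $B\subseteq N\setminus A$. This reduces to a direct computation: since $\mu^*_A(X)$ depends only on $|X\cap A|$ and $B$ is disjoint from $A$, one finds
\[
\Delta_A\mu^*_A(B) = \sum_{c=|A|-l_{|A|}}^{|A|}(-1)^{|A|-c}\binom{|A|}{c},
\]
which, after the substitution $j=|A|-c$ and an application of (\ref{eq:1.comb2}) (using that $l_{|A|}$ is even), equals $\binom{|A|-1}{l_{|A|}}$, \emph{independently of }$B$. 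Plugging this constant value back into the convex combination yields $I^{\mu^*_A}(A)=\binom{|A|-1}{l_{|A|}}$, matching the upper bound.

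The lower bound is handled symmetrically, replacing $\mu^*_A$ by $\mu_{A*}$ and $l_{|A|}$ by $l'_{|A|}$; the latter is odd, which produces the required minus sign when (\ref{eq:1.comb2}) is invoked. The main conceptual step is the $B$-independence of $\Delta_A\mu^*_A(B)$: this is precisely what promotes the pointwise bound of Lemma~\ref{lem:1} to a bound on any convex combination of derivatives, and in particular on $I^\mu(A)$. Everything else is bookkeeping on the two closed-form identities above.
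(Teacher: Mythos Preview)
Your proposal is correct and follows essentially the same route as the paper: both arguments observe that the Shapley weights in (\ref{eq:3}) are nonnegative and sum to one, then invoke Lemma~\ref{lem:1} to bound each $\Delta_A\mu(B)$ uniformly in $B$, and finally use that the same capacity $\mu^*_A$ attains the maximum for every $B$. Your extra direct computation of $\Delta_A\mu^*_A(B)$ is not needed---Lemma~\ref{lem:1} already asserts that the maximum is attained at $\mu^*_A$ regardless of $B$---but it does no harm and makes the simultaneous attainment fully explicit.
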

\begin{proof}
We will obtain the upper bound, the proof for the lower bound being
similar. From Lemma~\ref{lem:1}, we see that the maximum of $\Delta_A\mu(B)$ does
not depend on $B$. Thus, from (\ref{eq:3}), letting
$m^*(A)=\max_{\mu\in\NC(N)}m^\mu(A)$ we obtain
\[
\max_{\mu\in\NC(N)}I^\mu(A) = \sum_{B\subseteq N\backslash A} \frac{(n-a-b)!b!}{(n-a+1)!} m^*(A) = m^*(A) \sum_{b=0}^{n-a} \frac{(n-a-b)!b!}{(n-a+1)!}
    \binom{n-a}{b} = m^*(A).
\]
\end{proof}

Similarly, we obtain the exact bounds for the Banzhaf interaction index.
\begin{corollary}
Consider $A\subseteq N.$ The upper and lower bounds for $I_B(A)$ are the same as
for $m(A).$ These upper and lower bounds are obtained for the capacities $\mu^*_A$
and $\mu_{A*}$ of Theorem~\ref{th:2.boundM}.
\end{corollary}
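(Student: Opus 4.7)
The plan is to mirror exactly the strategy used for Corollary~\ref{cor:1}. The Banzhaf interaction transform $I_{\mathrm{B}}^\mu(A)$ is an unweighted average of the derivatives $\Delta_A\mu(B)$ over $B\subseteq N\setminus A$, so it is a convex combination of quantities whose individual maxima over $\NC(N)$ have already been computed in Lemma~\ref{lem:1}.

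First I will invoke Lemma~\ref{lem:1}, which gives, for every $B\subseteq N\setminus A$,
\[
\max_{\mu\in\NC(N)}\Delta_A\mu(B) = m^*(A) := \max_{\mu\in\NC(N)}m^\mu(A),
\]
and identifies $\mu_A^*$ of Theorem~\ref{th:2.boundM} as a maximizer. The key observation, which I will make explicit, is that this maximizer does not depend on $B$: since $B\cap A=\varnothing$, we have $(B\cup L)\cap A = L$ for every $L\subseteq A$, so $\mu_A^*(B\cup L)$ depends only on $|L|$, and hence $\Delta_A\mu_A^*(B)$ is a constant function of $B$, equal to $m^*(A)$.

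Consequently, for every $\mu\in\NC(N)$ one has termwise $\Delta_A\mu(B)\leqslant m^*(A)$, and therefore
\[
I_{\mathrm{B}}^\mu(A) \leqslant \frac{1}{2^{n-a}}\sum_{B\subseteq N\setminus A} m^*(A) = m^*(A),
\]
while substituting $\mu=\mu_A^*$ saturates the bound. The lower bound follows from the symmetric argument applied to $\mu_{A*}$, whose derivative $\Delta_A\mu_{A*}(B)$ is likewise constant in $B$ and equals the lower bound from Theorem~\ref{th:2.boundM}.

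The only delicate point is that the maximum be \emph{simultaneously} attained---that a single capacity $\mu_A^*$ achieves $\Delta_A\mu_A^*(B) = m^*(A)$ for every $B\subseteq N\setminus A$---rather than requiring a $B$-dependent maximizer. This simultaneity is what allows the supremum to commute with the uniform average defining $I_{\mathrm{B}}^\mu(A)$, and it is an immediate consequence of the explicit form of $\mu_A^*$ noted above. Once this is observed, no further calculation is needed.
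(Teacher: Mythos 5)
Your proof is correct and follows essentially the same route as the paper: bound each derivative termwise via Lemma~\ref{lem:1}, observe that $\mu_A^*$ (resp.\ $\mu_{A*}$) attains the bound simultaneously for all $B\subseteq N\setminus A$, and conclude using $\sum_{b=0}^{n-a}\binom{n-a}{b}=2^{n-a}$ to collapse the average. Your explicit remark that $\Delta_A\mu_A^*(B)$ is constant in $B$ is a welcome clarification of a point the paper leaves implicit.
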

\begin{proof}
Proceeding as for Corollary~\ref{cor:1}, the result follows from the identity $\sum_{b=0}^{n-a}\binom{n-a}{b}=2^{n-a}$.
\end{proof}

\section{Exact bounds for $k$-additive and $p$-symmetric capacities}
We show in this section that the results established for the bounds of the
M\"obius and interaction transforms on the set of normalized capacities are
still valid when one restricts to $k$-additive capacities and $p$-symmetric capacities.

\begin{proposition}\label{prop:1}
For any nonempty $A\subseteq N$, the normalized capacities $\mu^*_A,\mu_{A*}$
given in Theorem~\ref{th:2.boundM} are at most $k$-additive for any
$|A|\leqslant k\leqslant n$. Therefore, the upper and lower bounds for the
M\"obius transform, the interaction transform and the Banzhaf interaction
transform, are valid:
\[
\max_{\mu\in\NC(N)}m^\mu(A) = \max_{\mu\in\NC^{\leqslant k}(N)}m^\mu(A),\quad
\min_{\mu\in\NC(N)}m^\mu(A) = \min_{\mu\in\NC^{\leqslant k}(N)}m^\mu(A),
\]
for $|A|\leqslant k\leqslant n,\varnothing\neq A\subseteq N$,
and similarly for $I^\mu(A),I^\mu_\mathrm{B}(A)$.
\end{proposition}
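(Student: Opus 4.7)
The plan is to deduce the proposition from Theorem~\ref{th:2.boundM} by showing that the specific capacities $\mu^*_A$ and $\mu_{A*}$ attaining the extrema already lie in $\NC^{\leqslant k}(N)$ for every $k\geqslant |A|$. Once this is established, the inclusion $\NC^{\leqslant k}(N)\subseteq \NC(N)$ squeezes out the four displayed equalities: the restricted maxima and minima cannot exceed the unrestricted ones, yet the unrestricted extrema are attained at witnesses that in fact belong to the restricted polytope. The upper and lower bound cases are handled identically, so it suffices to treat $\mu^*_A$.

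The core task is therefore to compute $m^{\mu^*_A}$ and to verify that its support is contained in $2^A$. The key observation is that $\mu^*_A(B)=\mu^*_A(B\cap A)$ for every $B\subseteq N$, since by definition $\mu^*_A(B)$ depends on $B$ only through $|B\cap A|$. For any $C\subseteq N$, I would split each $B\subseteq C$ as $B=B_1\cup B_2$ with $B_1=B\cap A\subseteq C\cap A$ and $B_2=B\setminus A\subseteq C\setminus A$; the Möbius sum then factorizes as
\[
m^{\mu^*_A}(C) = \Bigl(\sum_{B_1\subseteq C\cap A}(-1)^{|C\cap A|-|B_1|}\mu^*_A(B_1)\Bigr)\Bigl(\sum_{B_2\subseteq C\setminus A}(-1)^{|C\setminus A|-|B_2|}\Bigr).
\]
Whenever $C\not\subseteq A$, the second factor vanishes by the standard identity $\sum_{B\subseteq X}(-1)^{|B|}=0$ for $X\neq\varnothing$, so $m^{\mu^*_A}(C)=0$. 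In particular $|C|>|A|$ forces $C\not\subseteq A$, and hence $\mu^*_A$ is at most $|A|$-additive, i.e. at most $k$-additive for all $k\geqslant|A|$.

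Exactly the same factorization applied to $\mu_{A*}$ yields the analogous conclusion. The bounds for $I^\mu(A)$ and $I^\mu_\mathrm{B}(A)$ then follow immediately: Corollary~\ref{cor:1} and its Banzhaf analogue show that the interaction extrema over $\NC(N)$ are attained precisely at $\mu^*_A$ and $\mu_{A*}$, which we have just shown to lie in $\NC^{\leqslant k}(N)$ for every $k\geqslant|A|$. The only obstacle in this proof is the brief bookkeeping check giving the product decomposition of the Möbius sum; no new combinatorial ingredient beyond Theorem~\ref{th:2.boundM} is required.
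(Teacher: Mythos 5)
Your proposal is correct and follows essentially the same route as the paper: both rest on the observation that $\mu^*_A(B)$ depends only on $B\cap A$, which forces $m^{\mu^*_A}(C)=0$ whenever $C\not\subseteq A$, and then invoke Theorem~\ref{th:2.boundM} and Corollary~\ref{cor:1} to transfer the bounds. The only difference is cosmetic: the paper derives the vanishing via the discrete derivative identity $m^\xi(B)=\Delta_B\xi(\varnothing)$ (noting $\Delta_i\mu^*_A\equiv 0$ for $i\notin A$), whereas you factorize the M\"obius sum directly; the two computations are the same.
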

\begin{proof}
Given a nonempty $A\subseteq N$, it suffices to show that $ \mu_A^{*}, \mu_{A*}$ are at
most $k$-additive for $k=|A|$. Take $B\subseteq N$ such that $k<|B|\leqslant n.$
Then, $B\setminus A\neq\varnothing$. On the other hand, observe that for
  any $i\not\in A$, 
\[
\Delta_i\mu^*_A(K)= \mu^*_A(K\cup i) -\mu^*_A(K)=0
\]
for any $K\not\ni i$. It follows that $\Delta_B\mu^*_A(K)=0$ for any $K$ as soon as
$B\setminus A\neq\varnothing$. Taking $K=\varnothing$, by (\ref{eq:md}), we conclude that
$m^{\mu^*_A}(B) = 0$ if $k<|B|\leqslant n$, as desired.
\end{proof}
 \begin{remark} Proposition~\ref{prop:1} tells us what is the maximum
    achieved by $m^\mu(A)$ for the set of $k$-additive capacities when
    $|A|\leqslant k\leqslant n$, but says nothing when $k<|A|$. The question
    appears to be very complex, because in general $\mu^*_A$ will not be
    $k$-additive, and the vertices of the polytope of $k$-additive capacities
    are not known, except for $k=1$ and 2. In particular, it is known that many
    vertices are \textit{not} $\{0,1\}$-valued as soon as $k>2$ (see
    \cite{micogi06}).
\end{remark}
\begin{proposition}
For any $1\leqslant p\leqslant n$ and any partition $\{ A_1,\ldots, A_p\} $ of $N$,
\[
\max_{\mu\in\NC(N)}m^\mu(A) = \max_{\mu\in \SNC^{\leqslant p}(A_1,\ldots, A_p)}m^\mu(A), \qquad
(\varnothing\neq A\subseteq N), \]
\[ \min_{\mu\in\NC(N)}m^\mu(A) = \min_{\mu\in \SNC^{\leqslant p}(A_1,\ldots, A_p)}m^\mu(A), \qquad
(\varnothing\neq A\subseteq N),
\]
and similarly for $I^\mu(A),I^\mu_\mathrm{B}(A)$.
\end{proposition}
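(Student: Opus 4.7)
The plan is to imitate Proposition~\ref{prop:1}: since $\SNC^{\leqslant p}(A_1,\ldots,A_p) \subseteq \NC(N)$, one inequality in each equality is automatic, and it suffices to exhibit capacities in $\SNC^{\leqslant p}(A_1,\ldots,A_p)$ attaining the bounds of Theorem~\ref{th:2.boundM}. First I would note that $\mu^*_A$ and $\mu_{A*}$ depend on $B$ only through $|B\cap A|$, so they already lie in $\SNC^{\leqslant p}(A_1,\ldots,A_p)$ whenever $A$ is a union of parts $A_i$; in that case nothing further is required, and the claims for $I^\mu(A), I^\mu_\mathrm{B}(A)$ follow from Corollary~\ref{cor:1} and its Banzhaf analogue. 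The work lies in the generic case, where substitute $p$-symmetric capacities must be constructed.

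For the M\"obius bounds the guiding observation is that $m^\mu(A)$ depends only on the restriction $\mu|_{2^A}$, so one is free to alter $\mu$ on subsets not contained in $A$ provided monotonicity and normalization survive. Writing $a_i = |A\cap A_i|$ and $a = |A|$, I would define
\[
\nu^*_A(B) = 1 \iff \sum_{i=1}^p \min(|B\cap A_i|, a_i) \geqslant a - l_a,
\]
and $\nu^*_A(B) = 0$ otherwise; this is the monotonic cover, in the profile coordinates $(|B\cap A_i|)_{i=1}^p$, of the indicator of $\{B\subseteq A : |B| \geqslant a - l_a\}$. Four routine checks close the M\"obius part: (i) $p$-symmetry is immediate since the defining condition only sees the profile; (ii) monotonicity follows from $b\mapsto \min(b,a_i)$ being nondecreasing; (iii) $\nu^*_A(\varnothing) = 0$ because $l_a < a$, and $\nu^*_A(N) = 1$ because $\sum_i a_i = a$; (iv) for $B\subseteq A$ one has $|B\cap A_i|\leqslant a_i$, so the $\min$ equals $|B\cap A_i|$ and the sum equals $|B|$, giving $\nu^*_A|_{2^A} = \mu^*_A|_{2^A}$ and hence $m^{\nu^*_A}(A) = m^{\mu^*_A}(A) = \binom{a-1}{l_a}$. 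Replacing $l_a$ by $l'_a$ in this recipe produces an analogous $\nu_{A*} \in \SNC^{\leqslant p}(A_1,\ldots,A_p)$ realising the lower bound.

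For $I^\mu(A)$ and $I^\mu_\mathrm{B}(A)$ the situation is more delicate since these transforms involve $\mu$ on all of $2^N$, not merely on $2^A$. Following the strategy of Corollary~\ref{cor:1} and its Banzhaf counterpart, to match the unrestricted bound I would need a $p$-symmetric $\nu$ with $\Delta_A\nu(B) = m^{\mu^*_A}(A)$ for every $B\subseteq N\setminus A$, which would immediately yield $I^\nu(A) = I^\nu_\mathrm{B}(A) = m^{\mu^*_A}(A)$. I expect this to be the main technical obstacle: the requirement of simultaneous maximal derivatives is a nontrivial system of linear constraints on the profile values, and the $\nu^*_A$ above need not solve it when $A$ is not a union of parts $A_i$. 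The natural remedies I would try are (a) to symmetrize $\mu^*_A$ over the subgroup $G = \prod_i S_{|A_i|}$ of partition-preserving permutations and compare the resulting averaged interaction against $m^{\mu^*_A}(A)$, and if that fails (b) to solve the linear system directly in the finite-dimensional space of profile values, exploiting the freedom on profiles outside the subpolytope $\{b_i\leqslant a_i\}$ to match the target derivatives at each $B$.
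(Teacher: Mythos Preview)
Your M\"obius argument is correct but works harder than necessary. The paper's trick is not to tailor a $p$-symmetric capacity to the given partition via the profile coordinates $(|B\cap A_i|)_i$, but simply to use a \emph{fully symmetric} threshold capacity $\mu_A^{**}(B)=1$ iff $|B|\geqslant |A|-l_{|A|}$ (and the analogous $\mu_{A**}$ with $l'_{|A|}$). For $C\subseteq A$ one has $\mu_A^{**}(C)=\mu^*_A(C)$, so $m^{\mu_A^{**}}(A)=m^{\mu^*_A}(A)$; and since a symmetric capacity lies in $\SNC^{\leqslant p}(A_1,\ldots,A_p)$ for \emph{every} partition, the proposition follows in one line. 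Your $\nu^*_A$ achieves the same thing, but the profile construction, the four checks, and the case distinction on whether $A$ is a union of parts are all avoidable.

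On the interaction and Banzhaf claims, your diagnosis is accurate: neither $\mu_A^{**}$ nor your $\nu^*_A$ has $\Delta_A\nu(B)$ constant over $B\subseteq N\setminus A$, so Corollary~\ref{cor:1}'s mechanism does not transfer directly, and the symmetrization over $G=\prod_i S_{|A_i|}$ you suggest does not obviously preserve $I^\mu(A)$ when $A$ is not $G$-stable. It is worth knowing, however, that the paper's own proof does \emph{not} supply this step either: it establishes only the M\"obius equalities via $\mu_A^{**},\mu_{A**}$ and leaves the ``similarly for $I^\mu(A),I^\mu_\mathrm{B}(A)$'' unargued. So the obstacle you identify is real, and both your proposal and the paper stop short of resolving it.
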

\begin{proof}
Consider the capacities defined by
\[
\mu_A^{**} (B):=\left\{ \begin{array}{cc} 1 & \mbox{~if~} |B| \geq l_{|A|}+1
  \\ 0 & \mbox{~otherwise} \end{array}\right. ,\, \, \, \mu_{A**}
(B):=\left\{ \begin{array}{cc} 1 & \mbox{~if~} |B| \geq l_{|A|} \\ 0 &
  \mbox{~otherwise} \end{array}\right.
\]
 Observe that
$\mu_A^{**}(C)=\mu_A^{*}(C),\, \mu_{A**} (C)=\mu_{A*} (C)$ for any $C\subseteq
A.$

Therefore $m^{\mu^{**}_A}(A)=m^{\mu^*_A}(A)$, $m^{\mu_{A**}}(A)=m^{\mu_{A*}}(A).$
On the other hand, $\mu_A^{**}$ and $\mu_{A**}$ are symmetric capacities, whence
they are $p$-symmetric for any $p$ and any partition of indifference.
\end{proof}

\section{{\bf Acknowledgements}}

We address all our thanks to the anonymous referees, for their constructive
comments, which have permitted to correct and improve the paper, in particular
the proof of Theorem~1 which is now much shorter and more clear.  This work was
partially supported by Grant MTM2012-33740.

\bibliographystyle{plain}
\bibliography{../BIB/fuzzy,../BIB/grabisch,../BIB/general}

\end{document}